\newtheorem{theorem}{Theorem}
\newtheorem{lemma}[theorem]{Lemma}
\theoremstyle{definition}
\newcommand{\tinyspace}{\mspace{1mu}}
\newcommand{\microspace}{\mspace{0.5mu}}
\newcommand{\norm}[1]{\left\lVert\tinyspace#1\tinyspace\right\rVert}
\newcommand{\snorm}[1]{\lVert\tinyspace#1\tinyspace\rVert}
\newcommand{\tr}{\operatorname{Tr}}
\newcommand{\ip}[2]{\left\langle #1 , #2\right\rangle}
\def\({\left(}
\def\){\right)}
\def\I{\mathbb{1}}
\newcommand{\setft}[1]{\mathrm{#1}}
\newcommand{\lin}[1]{\setft{L}\left(#1\right)}
\newcommand{\density}[1]{\setft{D}\left(#1\right)}
\newcommand{\pos}[1]{\setft{Pos}\left(#1\right)}
\def\complex{\mathbb{C}}
\def \lket {\left|}
\def \rket {\right\rangle}
\def \lbra {\left\langle}
\def \rbra {\right|}
\newcommand{\ket}[1]{\lket\microspace #1 \microspace\rket}
\newcommand{\bra}[1]{\lbra\microspace #1 \microspace\rbra}
\newenvironment{mylist}[1]{\begin{list}{}{
	\setlength{\leftmargin}{#1}
	\setlength{\rightmargin}{0mm}
	\setlength{\labelsep}{2mm}
	\setlength{\labelwidth}{8mm}
	\setlength{\itemsep}{0mm}}}
	{\end{list}}
\newcommand{\class}[1]{\textup{#1}}
\newcommand{\reg}[1]{\mathsf{#1}}
\def\X{\mathcal{X}}
\def\Y{\mathcal{Y}}
\def\Z{\mathcal{Z}}
\def\W{\mathcal{W}}
\def\V{\mathcal{V}}
\def\D{\mathcal{D}}
\def\P{\mathcal{P}}
\def\yes{\text{yes}}
\def\no{\text{no}}
\begin{document}

\title{\bf QIP = PSPACE}

\author{%
  Rahul Jain${}^\ast$
  \quad\quad
  Zhengfeng Ji${}^\dagger$
  \quad\quad
  Sarvagya Upadhyay${}^\ddagger$
  \quad\quad
  John Watrous${}^\ddagger$\\[4mm]
  ${}^\ast$%
  {\small\it Department of Computer Science and Centre for
    Quantum Technologies}\\[-1mm]
  {\small\it National University of Singapore}\\[-1mm]
  {\small\it Republic of Singapore}\\[2mm]
  ${}^\dagger$%
  {\small\it Perimeter Institute for Theoretical Physics}\\[-1mm]
  {\small\it Waterloo, Ontario, Canada}\\[2mm]
  ${}^\ddagger$%
  {\small\it Institute for Quantum Computing and School of Computer
  Science}\\[-1mm]
  {\small\it University of Waterloo}\\[-1mm]
  {\small\it Waterloo, Ontario, Canada}
}

\date{August 3, 2009}

\maketitle

\begin{abstract}
  We prove that the complexity class QIP, which consists of all
  problems having quantum interactive proof systems, is contained in
  PSPACE.
  This containment is proved by applying a parallelized form of the
  matrix multiplicative weights update method to a class of
  semidefinite programs that captures the computational power of
  quantum interactive proofs.
  As the containment of PSPACE in QIP follows immediately from the
  well-known equality IP = PSPACE, the equality QIP = PSPACE follows.
\end{abstract}

\section{Introduction} \label{sec:introduction}

Efficient proof verification is a fundamental notion in computational
complexity theory.
The most direct complexity-theoretic abstraction of efficient proof
verification is represented by the complexity class $\class{NP}$,
wherein a deterministic polynomial-time {\it verification procedure}
decides whether a given polynomial-length {\it proof string} is valid
for a given input.
One cannot overstate the importance of this class and its presently
unknown relationship to $\class{P}$, the class of problems solvable
deterministically in polynomial time.
This problem, which is known as the $\class{P}$ versus $\class{NP}$
problem, is one of the greatest of all unsolved problems in
mathematics.

In the early to mid 1980's, Babai \cite{Babai85} and Goldwasser,
Micali, and Rackoff \cite{GoldwasserMR85} introduced a computational
model that extends the notion of efficient proof verification to 
{\it interactive settings}. 
(Journal versions of these papers appeared later as
\cite{BabaiM88} and \cite{GoldwasserMR89}.)
In this model, which is known as the {\it interactive proof system}
model, a computationally bounded {\it verifier} interacts with a 
{\it prover} of unlimited computation power.
The interaction comprises one or more rounds of communication between
the prover and verifier, and the verifier may make use of randomly
generated bits during the interaction.
After the rounds of communication are finished, the verifier makes a
decision to {\it accept} or {\it reject} based on the interaction.

A decision problem $A$ is said to have an interactive proof system if
there exists a verifier, always assumed to run in polynomial time,
that meets two conditions: the {\it completeness} condition and the
{\it soundness} condition.
The completeness condition formalizes the requirement that true
statements can be proved, which in the present setting means that if
an input string $x$ is a yes-instance of $A$, then there exists a
course of action for the prover that causes the verifier to accept
with high probability.
The soundness condition formalizes the requirement that false
statements cannot be proved, meaning in this case that if an input
string $x$ is a no-instance of $A$, then the verifier will reject with
high probability no matter what course of action the prover takes.
One denotes by $\class{IP}$ the collection of decision
problems having interactive proof systems.
(Here, and throughout the rest of the paper, we take the term
{\it problem} to mean {\it promise problem}, and consider that all
complexity classes to be discussed are classes of promise problems.
Promise problems were defined by Even, Selman and Yacobi
\cite{EvenSY84}, and readers unfamiliar with them are referred to the
survey of Goldreich \cite{Goldreich05}.)

The expressive power of interactive proof systems was not initially
known when they were first defined, but it was soon determined to
coincide with $\class{PSPACE}$, the class of problems solvable
deterministically in polynomial space.
The containment $\class{IP}\subseteq\class{PSPACE}$, which is
generally attributed to Feldman \cite{Feldman86}, is fairly
straightforward---and readers not interested in proving this fact for
themselves can find a proof in \cite{HemaspaandraO02}.
Known proofs \cite{LundFKN92,Shamir92,Shen92} of the reverse
containment $\class{PSPACE}\subseteq\class{IP}$, on the other hand,
are not straightforward, and make essential use of a technique
commonly known as {\it arithmetization}.
This technique involves the extension of Boolean formulas to
multivariate polynomials over large finite fields whose 0 and 1
elements are taken to represent Boolean values.
Through the use of randomness and polynomial interpolation, verifiers
may be constructed for arbitrary $\class{PSPACE}$ problems.

Many variants of interactive proof systems have been studied,
including public-coin interactive proofs
\cite{Babai85,BabaiM88,GoldwasserS89}, multi-prover interactive proofs
\cite{Ben-OrGKW88}, zero-knowledge interactive proofs
\cite{GoldwasserMR89,GoldreichMW91}, and competing-prover interactive
proofs \cite{FeigeK97}.
The present paper is concerned with
{\it quantum interactive proof systems}, which were first studied a
decade after $\class{IP}=\class{PSPACE}$ was proved
\cite{Watrous99-qip-focs, KitaevW00}.
The fundamental notions of this model are the same as those of
classical interactive proof systems, except that the prover and
verifier may now process and exchange quantum information.
Similar to the classical case, several variants of quantum
interactive proof systems have been studied (including those
considered in 
\cite{HallgrenKSZ08,KempeKMV09,KobayashiM03,Kobayashi08,MarriottW05,
  Watrous09}).

One of the most interesting aspects of quantum interactive proof
systems, which distinguishes them from classical interactive proof
systems (at least to the best of our current knowledge), is that they
can be {\it parallelized} to three messages.
That is, quantum interactive proof systems consisting of just three
messages exchanged between the prover and verifier already have the
full power of quantum interactive proofs having a polynomial number of
messages \cite{KitaevW00}.
Classical interactive proofs are not known to hold this property, and
if they do the polynomial-time hierarchy collapses to the second level
\cite{BabaiM88}.

The complexity class $\class{QIP}$ is defined as the class of decision
problems having quantum interactive proof systems.
$\class{QIP}$ trivially contains $\class{IP}$, as the ability of a
verifier to process quantum information is never a hindrance---a
quantum verifier can simulate a classical verifier, and a
computationally unbounded prover can never use quantum information
to an advantage against a verifier behaving classically.
The inclusion $\class{PSPACE}\subseteq\class{QIP}$ is therefore
immediate.
The best upper bound on $\class{QIP}$ known prior to the present paper
was $\class{QIP} \subseteq \class{EXP}$, which was proved in
\cite{KitaevW00} through the use of semidefinite programming.
The optimal probability with which a given verifier can be made to
accept in a quantum interactive proof system can be represented as an
exponential-size semidefinite program, and known polynomial-time
algorithms for semidefinite programming provide the required tool to
prove the containment.
It has been an open problem for the last decade to establish more
precise bounds on the class $\class{QIP}$.

It was recently shown in the paper \cite{JainUW09} that
$\class{QIP}(2)$, the class of problem having 2-message quantum
interactive proof systems, is contained in $\class{PSPACE}$.
That paper made use of a parallel algorithm, based on a method known
as the {\it matrix multiplicative weights update method}, to
approximate optimal solutions for a class of semidefinite programs
that represent the maximum acceptance probabilities for verifiers in
two-message quantum interactive proofs.
In this paper we extend this result to all of $\class{QIP}$,
establishing the relationship $\class{QIP} = \class{PSPACE}$.
Similar to \cite{JainUW09}, we use the matrix multiplicative weights
update method, together with parallel methods for matrix computations.

The {\it multiplicative weights method} is a framework for algorithm
design having its origins in various fields, including learning
theory, game theory, and optimization. 
Its matrix variant, as discussed in the survey paper \cite{AroraHK05}
and the PhD thesis of Kale \cite{Kale07}, gives an iterative way to
approximate the optimal value of semidefinite programs
\cite{AroraK07,WarmuthK06}. 
In addition to its application in \cite{JainUW09}, it was applied to
quantum complexity in \cite{JainW09} to prove the containment of
the complexity class $\class{QRG(1)}$ in $\class{PSPACE}$.
The key strength of this method for these applications is that it can
be parallelized for some special classes of semidefinite programs.

A key result that allows our technique to work for the
entire class $\class{QIP}$ is the characterization 
$\class{QIP} = \class{QMAM}$ proved in \cite{MarriottW05}.
This characterization, which is described in greater detail in the
next section, concerns a restricted notion of interactive proof
systems known as {\it Arthur--Merlin games}.
An Arthur--Merlin game is an interactive proof system wherein the
verifier can only send uniformly generated random bits to the prover.
Following Babai \cite{Babai85}, one refers to the verifier
as {\it Arthur} and to the prover as {\it Merlin} in this setting.
It is also typical to refer to the individual bits of Arthur's
messages as {\it coins}, given that they are each uniformly generated
like the flip of a fair coin.
The restriction that Arthur sends only uniformly generated bits to
Merlin, and therefore does not have the option to base his messages on
private information unknown to Merlin, would seem to limit the power
of Arthur--Merlin games in comparison to ordinary interactive proof
systems.
But in fact this is known not to be the case, both for classical
\cite{GoldwasserS89} and quantum \cite{MarriottW05}
interactive proof systems.
In the quantum setting, this characterization admits a significant
simplification in the semidefinite programs that capture
the complexity of the class $\class{QIP}$.

The remainder of this paper has the following organization.
Section~\ref{sec:preliminaries} includes background information,
notation, and other preliminary discussions that are relevant to the
remainder of the paper.
Section~\ref{sec:sdp} describes a semidefinite programming problem
that captures the complexity of the class $\class{QIP}$ based on
quantum Arthur--Merlin games, and Section~\ref{sec:algorithm} presents
the main algorithm that solves this problem.
Finally, Section~\ref{sec:proof} discusses a parallel approximation to
the algorithm from Section~\ref{sec:algorithm} and explains how its
properties lead to the containment $\class{QIP}\subseteq\class{PSPACE}$.

\section{Preliminaries} \label{sec:preliminaries}

This section contains a summary of the notation and terminology on
linear algebra, quantum information, semidefinite programming,
quantum Arthur--Merlin games, and bounded-depth circuits that is used
later in the paper.
For the most part, these discussions are intended only to make clear
the notation and terminology that we use, and not to provide
introductions to these topics.
We assume that the reader already has familiarity with complexity
theory and quantum computing, and refer readers who are not to
\cite{AroraB09} and \cite{NielsenC00}.

\subsection{Linear algebra and quantum information} 
\label{sec:linear-algebra}

A {\it quantum register} refers to a collection of qubits, or more
generally a finite-size component in a quantum computer.
Every quantum register $\reg{V}$ has associated with it a finite,
non-empty set $\Sigma$ of classical states and a complex vector space
of the form $\V = \complex^{\Sigma}$.
We use the Dirac notation $\{\ket{a}\,:\,a\in\Sigma\}$ to refer to the
{\it standard basis} (or elementary unit vectors) in $\V$, and define
the inner product and Euclidean norm on $\V$ in the standard way.
The set $\{\bra{a}\,:\,a\in\Sigma\}$ consists of the elements in the
dual space of $\V$ that are in correspondence with the standard basis
vectors.

For such a space $\V$, we write $\lin{\V}$ to denote the space of
linear mappings, or {\it operators}, from $\V$ to itself, which is
identified with the set of square complex matrices indexed by $\Sigma$
in usual way.
An inner product on $\lin{\V}$ is defined as 
\[
\ip{A}{B} = \tr(A^{\ast}B),
\]
where $A^{\ast}$ denotes the adjoint (or conjugate transpose) of $A$.
The identity operator on $\V$ is denoted $\I_{\V}$ (or just $\I$ when
$\V$ is understood).

The following special types of operators are relevant to the paper:
\begin{mylist}{\parindent}
\item[1.]
An operator $A\in\lin{\V}$ is {\it Hermitian} if $A = A^{\ast}$.
The eigenvalues of a Hermitian operator are always real, and for $m =
\dim(\V)$ we write
\[
\lambda_1(A) \geq \lambda_2(A) \geq \cdots \geq \lambda_m(A)
\]
to denote the eigenvalues of $A$ sorted from largest to smallest.

\item[2.]
An operator $P\in\lin{\V}$ is {\it positive semidefinite} if it is
Hermitian and all of its eigenvalues are nonnegative.
The set of such operators is denoted $\pos{\V}$.
The notation $P\geq 0$ also indicates that $P$ is positive
semidefinite, and more generally the notations $A\leq B$ and $B\geq A$
indicate that $B - A\geq 0$ for Hermitian operators $A$ and $B$.

Every Hermitian operator $A$ can be expressed uniquely as $A = P - Q$
for positive semidefinite operators $P$ and $Q$ satisfying
$\ip{P}{Q} = 0$.
The operator $P$ is said to be the {\it positive part} of $A$, while
$Q$ is the {\it negative part}.

\item[3.]
A positive semidefinite operator $P\in\pos{\V}$ is also
said to be {\it positive definite} if all of its eigenvalues are
positive (which implies that $P$ must be invertible).
The notation $P>0$ also indicates that $P$ is positive definite, and
the notations $A<B$ and $B>A$ indicate that $B - A > 0$ for
Hermitian operators $A$ and $B$.

\item[4.]
An operator $\rho\in\pos{\V}$ is a {\it density operator} if it is
both positive semidefinite and has trace equal to 1.
The set of such operators is denoted $\density{\V}$.

\item[5.]
An operator $\Pi\in\pos{\V}$ is a {\it projection} if all of its
eigenvalues are either 0 or 1.

\end{mylist}

A {\it quantum state} of a register $\reg{V}$ is a density operator
$\rho\in\density{\V}$, and a {\it measurement} on $\reg{V}$ is a
collection $\{P_b\,:\,b\in\Gamma\}\subseteq\pos{\V}$ satisfying
\[
\sum_{b\in\Gamma}P_b = \I_{\V}.
\]
The set $\Gamma$ is the set of {\it measurement outcomes}, and when
such a measurement is performed on $\reg{V}$ while it is in the state
$\rho$, each outcome $b\in\Gamma$ occurs with probability
$\ip{P_b}{\rho}$.

The {\it spectral norm} of an operator $A\in\lin{\V}$ is defined as
\[
\norm{A} = \max\{ \norm{Av}\,:\,v\in\V,\,\norm{v} = 1\}.
\]
The spectral norm is sub-multiplicative, meaning that 
$\norm{AB} \leq \norm{A}\norm{B}$ for all operators $A,B\in\lin{\V}$,
and it holds that $\norm{P} = \lambda_1(P)$ for every positive
semidefinite operator $P$.
For any operator $A\in\lin{\V}$, the exponential of $A$ is defined as
\[
\exp(A) = \I + A + A^2/2 + A^3/6 + \cdots
\]
The {\it Golden-Thompson Inequality} (see Section IX.3 of
\cite{Bhatia97}) states that, for any two
Hermitian operators $A$ and $B$ on $\V$, we have
\[
\tr\left[ \exp(A + B)\right] \leq
\tr\left[\exp(A)\exp(B)\right].
\]

The tensor product $\V\otimes\W$ of vector spaces 
$\V = \complex^{\Sigma}$ and $\W = \complex^{\Gamma}$ may be
associated with the space $\complex^{\Sigma\times\Gamma}$, and
the tensor product of operators $A\in\lin{\V}$ and $B\in\lin{\W}$ is
then taken to be the unique operator $A\otimes B \in
\lin{\V\otimes\W}$ satisfying 
$(A\otimes B)(v\otimes w) = (Av) \otimes (Bw)$ for all $v\in\V$ and
$w\in\W$.
These notions may be associated with the usual Kronecker product of
vectors and matrices.
For quantum registers $\reg{V}$ and $\reg{W}$, the space $\V\otimes\W$
is associated with the pair $(\reg{V},\reg{W})$, viewed as a single
register.
Tensor products involving three or more spaces are handled similarly.

For a given linear mapping of the form
$\Phi:\lin{\V}\rightarrow\lin{\W}$, one defines the adjoint mapping
$\Phi^{\ast}:\lin{\W}\rightarrow\lin{\V}$ to be the unique linear
mapping that satisfies
\[
\ip{B}{\Phi(A)} = \ip{\Phi^{\ast}(B)}{A}
\]
for all operators $A\in\lin{V}$ and $B\in\lin{\W}$.

Finally, for spaces $\V$ and $\W$, one defines the {\it partial trace}
$\tr_{\V} : \lin{\V\otimes\W}\rightarrow\lin{\W}$ to be the unique
linear mapping that satisfies
$\tr_{\V}(A\otimes B) = (\tr A) B$ for all $A\in\lin{\V}$ and
$B\in\lin{\W}$.
A similar notation is used for the partial trace $\tr_{\W}$, or
partial traces defined on three or more tensor factors.
When this notation is used, the spaces on which the trace is not taken
are determined by context.
When a pair of registers $(\reg{V},\reg{W})$ is viewed as a single
register and has the quantum state $\rho\in\density{\V\otimes\W}$, one
defines the state of $\reg{W}$ to be $\tr_{\V}(\rho)$.
In other words, the partial trace describes the action of destroying,
or simply ignoring, a given quantum register.

\subsection{Semidefinite programming} \label{sec:sdp-prelim}

A {\it semidefinite program} over complex vector spaces $\V$ and
$\W$ is a pair of optimization problems as follows.

\begin{center}
  \begin{minipage}{2in}
    \centerline{\underline{Primal problem}}\vspace{-7mm}
    \begin{align*}
      \text{maximize:}\quad & \ip{C}{X}\\
      \text{subject to:}\quad & \Psi(X) \leq D,\\
      & X\in\pos{\V}.
    \end{align*}
  \end{minipage}
  \hspace*{25mm}
  \begin{minipage}{2in}
    \centerline{\underline{Dual problem}}\vspace{-7mm}
    \begin{align*}
      \text{minimize:}\quad & \ip{D}{Y}\\
      \text{subject to:}\quad & \Psi^{\ast}(Y) \geq C,\\
      & Y\in\pos{\W}.
    \end{align*}
  \end{minipage}
\end{center}

\noindent
Here, the operators $C\in\lin{\V}$ and $D\in\lin{\W}$ are Hermitian
and $\Psi:\lin{\V}\rightarrow\lin{\W}$ must be a linear mapping that 
maps Hermitian operators to Hermitian operators.
Readers familiar with semidefinite programming will note that the
above form of a semidefinite program is different from the
well-known {\it standard form}, but it is equivalent and better suited
for this paper's needs.
The form given above is, in essence, the one that is typically
followed for general conic programming \cite{BoydV04}.

It is typical that semidefinite programs are stated in forms that do
not explicitly describe $\Psi$, $C$ and $D$, and the same is true for
the semidefinite programs we will consider.
It is, however, routine to put them into the above form.

With the above optimization problems in mind, one defines the 
{\it primal feasible} set $\P$ and the {\it dual feasible} set $\D$ as
\begin{align*}
\P & = \left\{X\in\pos{\V}\,:\,\Psi(X) \leq D\right\},\\
\D & = \left\{Y\in\pos{\W}\,:\,\Psi^{\ast}(Y) \geq C\right\}.
\end{align*}
Operators $X\in\P$ and $Y\in\D$ are also said to be {\it primal feasible}
and {\it dual feasible}, respectively.
The functions $X\mapsto\ip{C}{X}$ and $Y\mapsto\ip{D}{Y}$ are
called the primal and dual {\it objective functions}, and the
{\it optimal values} associated with the primal and dual problems
are defined as
\[
\alpha = \sup_{X\in\P} \ip{C}{X}\quad\quad\text{and}\quad\quad
\beta = \inf_{Y\in\D} \ip{D}{Y}.
\]

Semidefinite programs have associated with them a powerful theory of
{\it duality}, which refers to the special relationship between the
primal and dual problems.
The property of {\it weak duality}, which holds for all semidefinite
programs, states that $\alpha \leq \beta$.
This property implies that every dual feasible operator
$Y\in\D$ provides an upper bound of $\ip{D}{Y}$ on the value
$\ip{C}{X}$ that is achievable over all choices of a primal feasible
$X\in\P$, and likewise every primal feasible operator $X\in\P$
provides a lower bound of $\ip{C}{X}$ on the value $\ip{D}{Y}$ that
is achievable over all choices of a dual feasible $Y\in\D$.

It is not always the case that $\alpha = \beta$ for a given
semidefinite program, but in most natural cases it does hold.
The situation in which $\alpha = \beta$ is known as 
{\it strong duality}, and several conditions have been identified that
imply strong duality.
One such condition is {\it strict dual feasibility}:
if $\alpha$ is finite and there exists an operator $Y>0$ such that
$\Psi^{\ast}(Y) > C$, then $\alpha = \beta$.
The symmetric condition of {\it strict primal feasibility} also
implies strong duality.

\subsection{Single-coin quantum Arthur--Merlin games} \label{sec:qmam}

Quantum Arthur--Merlin games were proposed in \cite{MarriottW05} as a
natural quantum variant of classical Arthur--Merlin games.
Here, one simply mimics the classical definition in requiring that
Arthur's messages to Merlin consist of uniformly generated random
bits.
Merlin's messages to Arthur, however, may be quantum; and after all of
the messages have been exchanged Arthur is free to perform a quantum
computation when deciding to accept or reject.

Of particular interest to us are quantum Arthur--Merlin games in which
three messages are exchanged, and where Arthur's only message consists
of a single bit.
In more precise terms, such an interaction takes the following form:
\begin{mylist}{\parindent}
\item[1.]
Merlin sends a quantum register $\reg{W}$ to Arthur.
Merlin is free to initialize this register to any quantum state of his
choice, and may entangle it with a register of his own if he chooses.

\item[2.]
After receiving $\reg{W}$ from Merlin, Arthur chooses a bit
$a\in\{0,1\}$ uniformly at random.
Merlin learns the value of $a$.

\item[3.]
Merlin sends Arthur a second quantum register $\reg{Y}$.
He does this after step 2, so he has the option to condition the state
of $\reg{Y}$ upon the value of $a$.
The register $\reg{Y}$ could, of course, be entangled with $\reg{W}$
in any way that quantum information theory permits.

\item[4.]
After receiving $\reg{Y}$, Arthur performs one of two binary-valued
measurements, determined by the value of the random bit $a$, on the
pair $(\reg{W},\reg{Y})$.
The measurement outcome 1 is interpreted as {\it acceptance}, while 0
is interpreted as {\it rejection}.
\end{mylist}

\noindent
Arthur's measurements must of course be efficiently implementable.
This notion is formalized by requiring that the measurements are
implementable by polynomial-time generated families of quantum
circuits, which naturally requires the registers $\reg{W}$ and
$\reg{Y}$ to consist of a number of qubits that is polynomial in the
length of the input.
Further details may be found in \cite{MarriottW05}.

The result of \cite{MarriottW05} that we make use of is that every
problem $A \in \class{QIP}$ has a single-coin Arthur--Merlin game as just
described.
The game is such that if $x$ is a yes-instance of the problem $A$,
then Arthur accepts with probability 1, whereas if the input $x$ is a
no-instance of the problem then Arthur accepts with probability at
most $1/2 + \varepsilon$, for any desired constant $\varepsilon>0$.
(In the construction given in \cite{MarriottW05}, Arthur's
measurements are always nontrivial projective measurements.
This implies that even for no-instance inputs, Merlin can cause Arthur to
accept with probability at least 1/2 by simply guessing in advance
Arthur's random bit.)

\subsection{Bounded-depth circuit complexity} 
\label{sec:NC}

In the last section of the paper, we will require the definitions of
two complexity classes based on bounded-depth circuit families:
$\class{NC}$ and $\class{NC}(\mathit{poly})$.
It is convenient for us to define these as classes of {\it functions}
rather than decision problems, and when we wish to view them as
classes of decision problems we simply restrict our attention to 
binary-valued functions.
The class $\class{NC}$ contains all functions computable by
logarithmic-space uniform Boolean circuits of polylogarthmic depth,
and $\class{NC}(\mathit{poly})$ contains all functions that can be
computed by polynomial-space uniform families of Boolean circuits
having polynomial-depth.
For decision problems it is known \cite{Borodin77} that
$\class{NC}(\mathit{poly}) = \class{PSPACE}$, and the proof of our
main result will make use of this fact.

There are two fundamental properties of $\class{NC}(\mathit{poly})$
that we will take advantage of.
The first is that functions in
$\class{NC}$ and $\class{NC}(\mathit{poly})$ compose 
well, and the second is that many computational problems involving
matrices are in $\class{NC}$.
In more precise terms, the first property is as follows.
If $F:\{0,1\}^{\ast} \rightarrow \{0,1\}^{\ast}$ is a function in
$\class{NC}(\mathit{poly})$ and $G:\{0,1\}^{\ast}\rightarrow
\{0,1\}^{\ast}$ is a function in $\class{NC}$, then the composition
$G\circ F$ is also in $\class{NC}(\mathit{poly})$.
This follows from the most straightforward way of composing the
families of circuits that compute $F$ and $G$.

To discuss the second property, it will be helpful to make clear our
assumptions concerning matrix computations.
We will always assume that the matrices on which computations are
performed have entries with rational real and imaginary parts, and
that the rational numbers are represented as pairs of integers in
binary notation.
Unless it is explicitly noted otherwise, any other rational numbers
involved in our computations will be represented in a similar way.

With these assumptions in place, we first note that elementary matrix
operations, including inverses and iterated sums and products of
matrices, are known to be in $\class{NC}$.
There is an extensive literature on this topic, and we refer the
reader to von zur Gathen's survey \cite{vzGathen93} for more details.
We also note that matrix exponentials and spectral decompositions can
be {\it approximated} to high accuracy in $\class{NC}$.
In more precise terms, the following two problems are in
$\class{NC}$.

\pagebreak[3]

\begin{center}
\underline{Matrix exponentials}\\[2mm]
\begin{tabular}{lp{5.5in}}
{\it Input:} &
An $n\times n$ matrix $M$, a positive rational number $\eta$,
and an integer $k$ expressed in unary notation (i.e., $1^k$).\\[1mm]
{\it Promise:} &
$\norm{M} \leq k$.\\[1mm]
{\it Output:} &
An $n\times n$ matrix $X$ such that 
$\norm{\exp(M) - X} < \eta$.
\end{tabular}
\\[6mm]
\underline{Spectral decompositions}\\[2mm]
\begin{tabular}{lp{5.5in}}
{\it Input:} &
An $n\times n$ Hermitian matrix $H$ and a positive rational number
$\eta$.\\[1mm]
{\it Output:} &
An $n\times n$ unitary matrix $U$ and an $n\times n$ real diagonal matrix
$\Lambda$ such that
\[
\norm{M - U \Lambda U^{\ast}} < \eta.
\]
\end{tabular}
\end{center}

\vspace{-6mm}

\noindent
The reader will note that in these problems, the description of the
error parameter $\eta$ could require as few as $O(\log(1/\eta))$ bits.
This implies that highly accurate approximations, for instance where
$\eta = 2^{-n}$, are possible in $\class{NC}$.
The fact that matrix exponentials can be approximated in $\class{NC}$
follows by truncating the series 
\[
\exp(M) = \I + M + M^2/2 + M^3/6 + \cdots
\]
to a number of terms linear in $k+\log(1/\eta)$.
(From a numerical point of view this is not a very good way to compute
matrix exponentials \cite{MolerV03}, but it is arguably the simplest
way to prove that the stated problem is in $\class{NC}$.)
The fact that spectral decompositions can be approximated in
$\class{NC}$ follows from a composition of known facts: in
$\class{NC}$ one can compute characteristic polynomials and null
spaces of matrices, perform orthogonalizations of vectors, and
approximate roots of integer polynomials to high precision
\cite{Csanky76,BorodinGH82,BorodinCP83,BenOrFKT86,vzGathen93,Neff94}.

\section{A semidefinite programming formulation of the problem}
\label{sec:sdp}

Consider Arthur's verification procedure for a given single-coin QMAM
protocol on a fixed input string $x$.
Arthur first receives a register $\reg{W}$, then generates a random
bit $a\in\{0,1\}$, and then receives a second register $\reg{Y}$.
He then measures $(\reg{W},\reg{Y})$ with respect to a
binary-valued measurement
\[
\{P_a,\I - P_a\} \subset \pos{\W\otimes\Y},
\]
where we take each of the operators $P_0$ and $P_1$ to represent
acceptance and $\I - P_0$ and $\I - P_1$ to represent rejection.
If the quantum state of $(\reg{W},\reg{Y})$ is given by a density
operator $\rho\in\density{\W\otimes\Y}$ when Arthur measures,
he will therefore accept with probability $\ip{P_a}{\rho}$.

Now define
\[
Q = \frac{1}{2} \ket{0}\bra{0} \otimes P_0 +
\frac{1}{2} \ket{1}\bra{1} \otimes P_1 \in
\pos{\X\otimes\W\otimes\Y},
\]
where we take $\X = \complex^{\{0,1\}}$ to be the vector space
corresponding to Arthur's random choice of $a\in\{0,1\}$, and consider
the optimal probability that Merlin can cause Arthur to accept.
If, for each of the values $a\in\{0,1\}$, Merlin is able to leave the
state $\rho_a$ in the registers $(\reg{W},\reg{Y})$ right before
Arthur measures, he will convince Arthur to accept with probability
\begin{equation} \label{eq:Ps-and-Q}
\frac{1}{2} \ip{P_0}{\rho_0} + \frac{1}{2} \ip{P_1}{\rho_1}
=
\ip{Q}{X}
\end{equation}
for
\[
X = \ket{0}\bra{0} \otimes \rho_0 + \ket{1}\bra{1} \otimes \rho_1.
\]
There is, of course, a constraint on Merlin's choice of $\rho_0$ and
$\rho_1$, which is that they must agree on $\reg{W}$, as Merlin cannot
touch the register $\reg{W}$ at any point after Arthur chooses the
random bit $a$.
In more precise terms, it must hold that
\begin{equation} \label{eq:agree-on-W}
\tr_{\Y}(\rho_0) = \sigma = \tr_{\Y}(\rho_1)
\end{equation}
for some density operator $\sigma\in\density{\W}$.
This, in fact, is Merlin's only constraint---for if he holds a
purification of the state $\sigma$, he is free to set the state of
$(\reg{W},\reg{Y})$ to any choice of $\rho_0$ and $\rho_1$ satisfying
\eqref{eq:agree-on-W} without needing access to $\reg{W}$.

Now, we note that the condition \eqref{eq:agree-on-W} implies that
\begin{equation} \label{eq:condition-on-X}
\tr_{\Y}(X) = \I_{\X} \otimes \sigma.
\end{equation}
Moreover, for an arbitrary operator $X\in\pos{\X\otimes\W\otimes\Y}$
satisfying the constraint \eqref{eq:condition-on-X}, one has that the
operators $\rho_0$ and $\rho_1$ defined as
\[
\rho_a = 
\(\bra{a}\otimes\I_{\W\otimes\Y}\)X\(\ket{a}\otimes\I_{\W\otimes\Y}\)
\]
for $a\in\{0,1\}$ satisfy the conditions \eqref{eq:Ps-and-Q} and
\eqref{eq:agree-on-W}.
It follows that the following semidefinite program represents the
optimal probability with which Merlin can convince Arthur to accept.

\begin{center}
  \begin{minipage}[t]{3in}
    \centerline{\underline{Primal problem}}\vspace{-7mm}
    \begin{align*}
      \text{maximize:}\quad & \ip{Q}{X}\\
      \text{subject to:}\quad & \tr_{\Y}(X) \leq \I_{\X}\otimes\sigma,\\
      & X\in\pos{\X\otimes\W\otimes\Y},\\
      & \sigma\in\density{\W}.
    \end{align*}
  \end{minipage}
  \begin{minipage}[t]{3in}
    \centerline{\underline{Dual problem}}\vspace{-7mm}
    \begin{align*}
      \text{minimize:}\quad & \norm{\tr_{\X}(Y)}\\
      \text{subject to:}\quad & Y\otimes\I_{\Y} \geq Q,\\
      & Y\in\pos{\X\otimes\W}.
    \end{align*}
  \end{minipage}
\end{center}

\noindent
Note that the inequality in the primal problem can be exchanged for an
equality without changing the optimal value.
This is because any primal feasible $X$ can be inflated to achieve the
equality $\tr_{\Y}(X) = \I_{\X}\otimes\sigma$ for some choice of
$\sigma$, and this can only increase the value of the objective
function by virtue of the fact that $Q$ is positive semidefinite.
It is immediate that the optimal solution to the primal problem is
bounded and the dual problem is strictly feasible, from which strong
duality follows; the primal and dual problems have the same optimal
values.

Now, under the assumption that $Q$ is invertible, one may perform a
change of variables to put the above semidefinite program into a form
that more closely resembles the one in \cite{JainUW09}.
To do this we define a linear mapping
$\Phi:\lin{\X\otimes\W\otimes\Y}\rightarrow\lin{\X\otimes\W}$ as
\begin{equation} \label{eq:definition-of-Phi}
\Phi(X) = \tr_{\Y}\(Q^{-1/2} X Q^{-1/2}\),
\end{equation}
whose adjoint mapping
$\Phi^{\ast}:\lin{\X\otimes\W}\rightarrow\lin{\X\otimes\W\otimes\Y}$
is given by
\[
\Phi^{\ast}(Y) = Q^{-1/2}(Y\otimes\I_{\Y})Q^{-1/2},
\]
and consider the following semidefinite program.

\begin{center}
  \begin{minipage}[t]{3in}
    \centerline{\underline{Primal problem}}\vspace{-7mm}
    \begin{align*}
      \text{maximize:}\quad & \tr(X)\\
      \text{subject to:}\quad & \Phi(X) \leq \I_{\X}\otimes\sigma,\\
      & X\in\pos{\X\otimes\W\otimes\Y},\\
      & \sigma\in\density{\W}.
    \end{align*}
  \end{minipage}
  \begin{minipage}[t]{3in}
    \centerline{\underline{Dual problem}}\vspace{-7mm}
    \begin{align*}
      \text{minimize:}\quad & \norm{\tr_{\X}(Y)}\\
      \text{subject to:}\quad & \Phi^{\ast}(Y) \geq
      \I_{\X\otimes\W\otimes\Y},\\
      & Y\in\pos{\X\otimes\W}.
    \end{align*}
  \end{minipage}
\end{center}

\noindent
It is clear that this semidefinite program has the same optimal value
as the previous one.

We will be interested in the optimal value of this semidefinite
program in the case that $\snorm{Q^{-1}}$ is upper-bounded by a fixed
constant and where there is a promise on the optimal value.
The promise, which will come from the properties of the quantum
Arthur--Merlin games under consideration, is that the optimal value
does not lie in the interval $(5/8,\,7/8)$, and the goal is to
determine whether the optimal value is larger than $7/8$ or smaller
than $5/8$.

For readers familiar with the semidefinite program for
$\class{QIP}(2)$ presented in \cite{JainUW09}, we note that there are
two essential differences between it and the one above.
The first difference is that the semidefinite program in
\cite{JainUW09} effectively replaces the density operator $\sigma$
with the scalar value 1, which would seem to suggest added difficulty
for the case at hand.
The second difference is that $\X$ is two-dimensional for the
semidefinite program above, whereas it has arbitrary size in 
\cite{JainUW09}.
This second difference more than compensates for the difficulty
induced by the first, and we find that the above semidefinite program
is actually much easier to solve than the one for $\class{QIP}(2)$.

\section{The main algorithm and its analysis} \label{sec:algorithm}

We now present the main algorithm for the semidefinite programming
problem from the previous section.
The algorithm, which is described in Figure~\ref{fig:algorithm}, takes
as input an operator
\[
Q\in\pos{\X\otimes\W\otimes\Y}.
\]
It is assumed that $Q$ is invertible and satisfies
$\snorm{Q^{-1}}\leq 64$.
(The algorithm could easily be adapted to handle any other fixed
constant in place of 64, but this choice is sufficient for our needs.)
Moreover, it is assumed that the optimal value of the semidefinite
program in Section~\ref{sec:sdp} that is defined by $Q$ does not lie
in the interval $(5/8,\,7/8)$.
Our goal is to prove that the algorithm accepts when the optimal value
is at least $7/8$ and rejects when the optimal value is at most $5/8$.

Here we present the correctness of the algorithm under the assumption
that all computations are performed exactly.
Issues that arise due to inaccuracies in the computation are discussed
in the next section.

\begin{figure}[t]
\noindent\hrulefill
\begin{mylist}{8mm}
\item[1.]
Let $N = \dim(\X\otimes\W\otimes\Y)$ and $M = \dim(\W)$, and define
\[
W_0 = \I_{\X\otimes\W\otimes\Y},\quad\quad
\rho_0 = W_0/N, \quad\quad
Z_0 = \I_{\W} 
\quad\quad\text{and}\quad\quad
\xi_0 = Z_0/M.
\]
Also let 
\[
\gamma = \frac{4}{3},
\quad\quad
\varepsilon = \frac{1}{64},
\quad\quad
\delta = \frac{\varepsilon}{2\norm{Q^{-1}}}
\quad\quad\text{and}\quad\quad
T = \left\lceil\frac{4\log(N)}{\varepsilon^3\delta}\right\rceil.
\]
\item[2.]
Repeat for each $t = 0,\ldots,T-1$:

\begin{mylist}{8mm}
\item[(a)]
Let $\Pi_t$ be the projection onto the positive eigenspaces of
the operator
\[
\Phi(\rho_t) - \gamma \,\I_{\X} \otimes \xi_t,
\]
where $\Phi$ is defined from $Q$ as in \eqref{eq:definition-of-Phi},
and set $\beta_t = \ip{\Pi_t}{\Phi(\rho_t)}$.

\item[(b)]
If $\beta_t \leq \varepsilon$ then {\it accept}, else let
\[
W_{t+1} = \exp\(-\epsilon\delta\sum_{j = 0}^t
\Phi^{\ast}(\Pi_j/\beta_j)\), \quad\quad
\rho_{t+1} = W_{t+1}/\tr(W_{t+1}),
\]
and
\[
Z_{t+1} = \exp\(\varepsilon\delta\sum_{j = 0}^t
\tr_{\X}(\Pi_j/\beta_j)\),
\quad\quad
\xi_{t+1} = Z_{t+1}/\tr(Z_{t+1}).
\]
\end{mylist}

\item[3.]
If acceptance did not occur in step 2, then {\it reject}.
\end{mylist}
\noindent\hrulefill
\caption{An algorithm that {\it accepts} if the optimal value of the
  semidefinite program in Section~\ref{sec:sdp} is larger than 7/8,
  and {\it rejects} if the optimal value is smaller than 5/8.}
\label{fig:algorithm}
\end{figure}

Assume first that the algorithm accepts, and write
\[
\rho = \rho_t,\quad
\Pi = \Pi_t,\quad
\xi = \xi_t\quad
\text{and}\quad
\beta = \beta_t
\]
for $t\in\{0,\ldots,T-1\}$ corresponding to the iteration in which
acceptance occurs.
For the sake of clarity, let us note explicitly that
\[
\rho\in\density{\X\otimes\W\otimes\Y},\quad\quad
\Pi\in\pos{\X\otimes\W}
\quad\quad\text{and}\quad\quad
\xi\in\density{\W}.
\]
We wish to prove that the optimal value of our semidefinite program is
at least $7/8$, and we will do this by constructing a primal feasible
solution that achieves an objective value strictly larger than $5/8$.

By the definition of $\Pi$, it holds that
\begin{equation} \label{eq:accept1}
\Pi \Phi(\rho) \Pi 
\geq \Pi (\Phi(\rho) -  \gamma\,\I_{\X}\otimes\xi) \Pi 
\geq \Phi(\rho) - \gamma\,\I_{\X}\otimes\xi,
\end{equation}
and by Lemma~\ref{lemma:inequality2} (which is stated and proved
below) it holds that
\begin{equation} \label{eq:accept2}
2 \I_{\X} \otimes \tr_{\X}\(\Pi \Phi(\rho) \Pi\)
\geq \Pi \Phi(\rho) \Pi.
\end{equation}
Combining the equations \eqref{eq:accept1} and \eqref{eq:accept2} one
has
\begin{equation} \label{eq:primal-bound-1}
\Phi(\rho) \leq \I_{\X} \otimes \( \gamma\,\xi +
2 \tr_{\X}\(\Pi\Phi(\rho)\Pi\)\).
\end{equation}
It therefore holds that
\[
X = \frac{\rho}{\gamma + 2 \ip{\Pi}{\Phi(\rho)}}
     \quad\quad\text{and}\quad\quad
\sigma = 
\frac{\gamma \xi + 2 \tr_{\X}\(\Pi \Phi(\rho) \Pi\)}
     {\gamma + 2 \ip{\Pi}{\Phi(\rho)}}
\]
represent a feasible solution to the primal problem under
consideration, achieving the objective value
\[
\frac{1}{\gamma + 2\ip{\Pi}{\Phi(\rho)}} = \frac{1}{\gamma + 2\beta} 
\geq \frac{1}{\gamma + 2\varepsilon} > \frac{5}{8}
\]
as required.

Now assume that the algorithm rejects, and consider the operator
\[
Y = \frac{(1 + 2\varepsilon)}{T}\sum_{t=0}^{T-1} \Pi_t/\beta_t.
\]
We claim that $Y$ is dual feasible and achieves an objective value
that is strictly smaller than $7/8$.
This will imply that the optimal value of the semidefinite program is
at most $5/8$.

Let us first prove that $Y$ is dual feasible.
It is clear that $Y$ is positive semidefinite, so it suffices to
prove that $\Phi^{\ast}(Y) \geq \I_{\X\otimes\W\otimes\Y}$, or
equivalently that $\lambda_N(\Phi^{\ast}(Y))\geq 1$.
Observe, for each $t = 0,\ldots,T-1$, that
\begin{align*}
\tr(W_{t+1})
& = 
\tr\left[
  \exp\(-\varepsilon\delta\Phi^{\ast}(\Pi_0/\beta_0+\cdots+\Pi_t/\beta_t)\)
  \right]\\
& \leq
\tr\left[\exp\(-\varepsilon\delta\Phi^{\ast}(\Pi_0/\beta_0+\cdots
  +\Pi_{t-1}/\beta_{t-1})\)
  \exp\(-\varepsilon\delta\Phi^{\ast}(\Pi_t/\beta_t)\)
  \right]\\
& = \tr\left[W_t \exp\(-\varepsilon\delta\Phi^{\ast}(\Pi_t/\beta_t)\)
  \right]
\end{align*}
by the Golden--Thompson inequality.
As each $\Pi_t$ is a projection operator, we have
\[
  \norm{\Phi^{\ast}(\Pi_t)} 
  = \norm{Q^{-1/2} (\Pi_t\otimes\I_{\Y}) Q^{-1/2}}
  \leq \norm{Q^{-1/2}}^2 = \norm{Q^{-1}},
\]
where we have used the sub-multiplicativity of the spectral norm to
obtain the inequality.
Given that $\beta_t > \varepsilon$ in the case at hand, it follows
that $\norm{\delta\Phi^{\ast}(\Pi_t/\beta_t)} < 1$.
By Lemma~\ref{lemma:exp-inequalities} (also presented below) it
therefore follows that
\[
\exp\(-\varepsilon\delta\Phi^{\ast}(\Pi_t/\beta_t)\)
\leq \I - \varepsilon\delta\exp(-\varepsilon) \Phi^{\ast}(\Pi_t/\beta_t).
\]
As each $W_t$ is positive semidefinite, we obtain
\begin{equation} \label{eq:W-frac-inequality}
\tr(W_{t+1}) \leq \tr(W_t)
\(1 - \varepsilon\delta\exp(-\varepsilon)
\ip{\frac{W_t}{\tr(W_t)}}{\Phi^{\ast}(\Pi_t/\beta_t)}\).
\end{equation}
Substituting $\rho_t = W_t/\tr(W_t)$ yields
\begin{align*}
  \tr(W_{t+1}) 
  & \leq \tr(W_t) \(1 - \varepsilon\delta\exp(-\varepsilon)
  \ip{\rho_t}{\Phi^{\ast}(\Pi_t/\beta_t)}\)\\
  & = \tr(W_t) \(1 - \varepsilon\delta\exp(-\varepsilon)\)\\
  & \leq \tr(W_t)\exp(-\varepsilon\delta\exp(-\varepsilon)),
\end{align*}
where the equality follows from
$\ip{\rho_t}{\Phi^{\ast}(\Pi_t)}=\ip{\Phi(\rho_t)}{\Pi_t}=\beta_t$
and the last inequality follows from the fact that
$1 + z \leq \exp(z)$ for all real numbers $z$.
As $\tr(W_0) = N$, it follows that
\begin{equation} \label{eq:reject1}
  \tr(W_T) \leq \tr(W_0)\exp(-T \varepsilon\delta\exp(-\varepsilon))
= \exp(-T \varepsilon\delta\exp(-\varepsilon) + \log(N)).
\end{equation}
On the other hand, we have
\begin{equation} \label{eq:reject2}
  \tr(W_T) 
  = \tr\left[\exp\(-\varepsilon\delta\sum_{t=0}^{T-1}
    \Phi^{\ast}\(\Pi_t/\beta_t\)\)\right] 
  \geq
  \exp\(-\varepsilon\delta\lambda_N \(\Phi^{\ast} 
  \(\sum_{t=0}^{T-1} \Pi_t/\beta_t\)\)\).
\end{equation}
Combining \eqref{eq:reject1} and \eqref{eq:reject2}, we have
\[
\lambda_N\(\Phi^{\ast} \(
\sum_{t=0}^{T-1} \Pi_t/\beta_t
\)\) \geq T \exp(-\varepsilon) -
\frac{\log(N)}{\varepsilon\delta}.
\]
Using the inequality 
$\exp(-\varepsilon)-\varepsilon^2/4 > 1-\varepsilon$, and
substituting the value of $T$ specified by the algorithm, we have
\[
\lambda_N(\Phi^{\ast}(Y))
\geq (1 + 2\varepsilon)
\( \exp(-\varepsilon) - \frac{\log(N)}{T \varepsilon\delta} \)
> (1 + 2\varepsilon)(1 - \varepsilon) > 1
\]
as required.

Now it remains to establish an upper bound on the dual objective value
achieved by $Y$.
A similar method to the one used to prove the feasibility of $Y$ above
will provide a suitable bound.
We begin by observing, for each $t = 0,\ldots,T-1$, that
\begin{align*}
\tr(Z_{t+1})
& = \tr\left[\exp\(\varepsilon\delta \tr_{\X} \(
  \Pi_0/\beta_0 + \cdots + \Pi_t/\beta_t \)\)\right]\\
& \leq \tr\left[\exp\(\varepsilon\delta \tr_{\X} \(
  \Pi_0/\beta_0 + \cdots + \Pi_{t-1}/\beta_{t-1} \)\)
  \exp\( \varepsilon\delta \tr_{\X}(\Pi_t/\beta_t) \)
\right]\\
& = \tr\left[Z_t\exp\( \varepsilon\delta \tr_{\X}(\Pi_t/\beta_t) \)
\right].
\end{align*}
Given that
\[
\norm{\tr_{\X}(\Pi_t)} \leq
\norm{ \(\bra{0} \otimes \I_{\W}\)\Pi_t\(\ket{0} \otimes \I_{\W}\)}
+ \norm{ \(\bra{1} \otimes \I_{\W}\)\Pi_t\(\ket{1} \otimes \I_{\W}\)}
\leq 2,
\]
and using the fact that $\beta_t > \epsilon$ in the case at hand, 
it follows that $\norm{\delta \tr_\X (\Pi_t/\beta_t)} < 1$.
We now apply Lemma~\ref{lemma:exp-inequalities} to obtain
\[
\exp\( \varepsilon\delta \tr_{\X}(\Pi_t/\beta_t) \)
\leq \I + \varepsilon\delta \exp(\varepsilon) \tr_{\X}(\Pi_t/\beta_t).
\]
As each $Z_t$ is positive semidefinite it follows that
\begin{equation} \label{eq:Z-frac-inequality}
  \tr(Z_{t+1})  \leq
  \tr(Z_t)\(1+\varepsilon\delta\exp(\varepsilon)
    \ip{\frac{Z_t}{\tr(Z_t)}}{\tr_{\X}(\Pi_t/\beta_t)}\).
\end{equation}
Substituting $\xi_t = Z_t/\tr(Z_t)$ gives
\[
\tr(Z_{t+1}) \leq
\tr(Z_t)\(1+ \varepsilon\delta \exp(\varepsilon)
  \ip{\xi_t}{\tr_{\X}(\Pi_t/\beta_t)}\)
=  \tr(Z_t)\(1+ \varepsilon\delta\exp(\varepsilon)
\ip{\I_{\X}\otimes \xi_t}{\Pi_t/\beta_t}\).
\]
Now, as $\ip{\Phi(\rho_t)-\gamma\I_{\X}\otimes\xi_t}{\Pi_t}\geq 0$, we
may again use the fact that $1 + z \leq \exp(z)$ for all real numbers
$z$ to obtain
\begin{equation} \label{eq:other-Z-inequality}
\tr(Z_{t+1}) 
\leq
\tr(Z_t)\(1+ \frac{\varepsilon\delta\exp(\varepsilon)}{\gamma}
\ip{\Phi(\rho_t)}{\Pi_t/\beta_t}\)
\leq
\tr(Z_t) \exp\(\frac{\varepsilon\delta\exp(\varepsilon)}{\gamma}\).
\end{equation}
Consequently
\[
\tr(Z_T) \leq 
\tr(Z_0)\exp\(\frac{T \varepsilon\delta \exp(\varepsilon)}{\gamma}\)
= \exp\(\frac{T \varepsilon\delta \exp(\varepsilon)}{\gamma} + \log(M)\).
\]
On the other hand we have
\[
\tr(Z_T) =
\tr\left[
\exp\( \varepsilon\delta \sum_{t = 0}^{T-1}
\tr_{\X}(\Pi_t/\beta_t) \)\right]
\geq \exp\(\varepsilon\delta
\lambda_1\(\tr_{\X} \(\sum_{t = 0}^{T-1}\Pi_t/\beta_t\)\)\),
\]
and therefore
\[
\lambda_1 \( \tr_{\X}\(
\sum_{t = 0}^{T-1} \Pi_t/\beta_t
\) \) \leq \frac{T \exp(\varepsilon)}{\gamma} +
\frac{\log(M)}{\varepsilon\delta}.
\]
Given that $M<N$ it follows that
\[
\norm{\tr_{\X}(Y)} =
\lambda_1(\tr_{\X}(Y))
\leq (1 + 2\varepsilon)\(
\frac{\exp(\varepsilon)}{\gamma} + \frac{\log(M)}{T\varepsilon\delta}\)
< \frac{7}{8}.
\]
Thus, $Y$ is a dual feasible solution whose objective value is
smaller than $7/8$, and we conclude that the optimal value of our
semidefinite program is at most $5/8$ as required.

It remains to state and prove the two lemmas that were required in the
analysis above.
They are as follows.

\begin{lemma} \label{lemma:inequality2}
  Let $P\in\pos{\X\otimes\Z}$ be any positive semidefinite operator,
  and assume that $\dim(\X) = 2$.
  Then $P \leq 2 \I_{\X} \otimes \tr_{\X}(P)$.
\end{lemma}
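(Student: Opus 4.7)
The plan is to prove the inequality by exhibiting a direct decomposition of $2\I_{\X}\otimes\tr_{\X}(P) - P$ as a sum of positive semidefinite operators, using the \emph{Pauli twirl} on the qubit space $\X$.

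First I would recall (or verify by direct calculation in a $2\times 2$ basis) the standard fact that for the four single-qubit Pauli operators $\sigma_0 = \I$, $\sigma_1 = X$, $\sigma_2 = Y$, $\sigma_3 = Z$ acting on $\X = \complex^2$, the map $\rho \mapsto \sum_{j=0}^{3} \sigma_j\tinyspace\rho\tinyspace\sigma_j$ sends every $\rho \in \lin{\X}$ to $2\tr(\rho)\I_{\X}$. Tensoring this identity with the identity map on $\lin{\Z}$, one obtains
\[
\sum_{j=0}^{3} (\sigma_j \otimes \I_{\Z})\, P\, (\sigma_j \otimes \I_{\Z})
\;=\; 2\,\I_{\X}\otimes\tr_{\X}(P),
\]
since the partial action of the twirl on the first tensor factor produces $\tr_{\X}$ on the reduced operator while smearing it against $\I_{\X}$.

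The key observation is then that each summand on the left-hand side is positive semidefinite: $P \in \pos{\X\otimes\Z}$ and conjugation by the unitary $\sigma_j \otimes \I_{\Z}$ preserves positive semidefiniteness. In particular the $j=0$ term is $P$ itself, so the sum equals $P$ plus three other positive semidefinite operators. Rearranging gives
\[
2\,\I_{\X}\otimes\tr_{\X}(P) - P
\;=\; \sum_{j=1}^{3} (\sigma_j \otimes \I_{\Z})\, P\, (\sigma_j \otimes \I_{\Z})
\;\geq\; 0,
\]
which is precisely the claim.

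There is really no obstacle here; the only thing one has to be a little careful about is the constant. The Pauli twirl on a single qubit produces the depolarizing channel with normalization $1/4$, so the unnormalized sum yields the factor $2 = 4/2$ in front of $\I_{\X}\otimes\tr_{\X}(P)$, which matches the statement exactly and is tight (as one sees by taking $P = \ket{\psi}\bra{\psi}$ for a maximally entangled $\ket{\psi}\in\X\otimes\X$). The hypothesis $\dim(\X) = 2$ enters only through the use of the four Pauli operators as an orthonormal (in Hilbert--Schmidt) unitary basis for $\lin{\X}$; for larger $\X$ one would have to use a Weyl--Heisenberg basis and would get the weaker bound $P \leq \dim(\X)\,\I_{\X}\otimes\tr_{\X}(P)$.
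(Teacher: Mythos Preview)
Your proof is correct and is essentially identical to the paper's own argument: the paper also writes $2\,\I_{\X}\otimes\tr_{\X}(P)$ as $P + \sum_{j=1}^{3}(\sigma_j\otimes\I_{\Z})P(\sigma_j\otimes\I_{\Z})$ using the three nontrivial Pauli operators and observes that each conjugate is positive semidefinite. Your additional remarks on tightness and the $\dim(\X)$-dimensional generalization are correct but go beyond what the paper records.
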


\begin{proof}
  Let $\sigma_x$, $\sigma_y$ and $\sigma_z$ denote the Pauli operators
  on $\X$.
  In matrix form they are
  \[
  \sigma_x = 
  \begin{pmatrix} 
    0 & 1\\ 1 & 0
  \end{pmatrix},
  \quad\quad
  \sigma_y = 
  \begin{pmatrix} 
    0 & -i\\ i & 0
  \end{pmatrix}
  \quad\quad\text{and}\quad\quad
  \sigma_z = 
  \begin{pmatrix} 
    1 & 0\\ 0 & -1
  \end{pmatrix}.
  \]
  As each of these operators is Hermitian, we have that
  $(\sigma_x \otimes\I_{\Z}) P (\sigma_x \otimes\I_{\Z})$,
  $(\sigma_y \otimes\I_{\Z}) P (\sigma_y \otimes\I_{\Z})$
  and $(\sigma_z \otimes\I_{\Z}) P (\sigma_z \otimes\I_{\Z})$ are
  positive semidefinite.
  It therefore holds that
  \[
  2 \I_{\X} \otimes \tr_{\X}(P) 
  =
  P 
  + (\sigma_x \otimes\I_{\Z}) P (\sigma_x \otimes\I_{\Z})
  + (\sigma_y \otimes\I_{\Z}) P (\sigma_y \otimes\I_{\Z})
  + (\sigma_z \otimes\I_{\Z}) P (\sigma_z \otimes\I_{\Z})
  \geq P
  \]
  as required.
\end{proof}

\begin{lemma} \label{lemma:exp-inequalities}
  Let $P$ be an operator satisfying $0\leq P\leq \I$.  
  Then for every real number $\eta > 0$, the following two inequalities
  hold:
  \begin{align*}
    \exp(\eta P) & \leq \I + \eta \exp(\eta)P,\\
    \exp(-\eta P) & \leq \I - \eta \exp(-\eta)P.
  \end{align*}
\end{lemma}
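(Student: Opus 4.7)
The plan is to reduce both operator inequalities to scalar inequalities on the spectrum of $P$, and then handle those via convexity.

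First I would diagonalize. Since $0 \leq P \leq \I$, we can write $P = \sum_k p_k \ket{v_k}\bra{v_k}$ in an orthonormal eigenbasis, where every $p_k \in [0,1]$. Both sides of each proposed inequality are functions of $P$ that are diagonal in the same basis, so it suffices to establish, for each $p \in [0,1]$ and $\eta > 0$, the two scalar inequalities
\[
\exp(\eta p) \leq 1 + \eta \exp(\eta) p
\quad\text{and}\quad
\exp(-\eta p) \leq 1 - \eta \exp(-\eta) p.
\]

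Next I would use convexity of the exponential. Since $p \mapsto \exp(\pm \eta p)$ is convex on $[0,1]$, it lies below the secant line joining its values at the endpoints $p = 0$ and $p = 1$, giving
\[
\exp(\eta p) \leq 1 + (\exp(\eta) - 1) p
\quad\text{and}\quad
\exp(-\eta p) \leq 1 - (1 - \exp(-\eta)) p.
\]
It then remains to check that $\exp(\eta) - 1 \leq \eta \exp(\eta)$ and $\eta \exp(-\eta) \leq 1 - \exp(-\eta)$. Both simplify, after multiplying through by $\exp(-\eta)$ in the first case, to the single elementary fact $1 + x \leq \exp(x)$ for all real $x$ (already invoked elsewhere in the paper), applied at $x = -\eta$ and $x = \eta$ respectively.

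There is no real obstacle here: the only mild subtlety is verifying that the secant-line upper bound is legitimate, which follows from convexity of $t \mapsto \exp(t)$, and then matching constants. Once the scalar inequalities are in hand, the operator inequalities follow immediately from the functional calculus, and the proof is complete.
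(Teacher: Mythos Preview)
Your proof is correct. Both you and the paper begin by reducing to the scalar case via the spectral decomposition of $P$; the difference lies only in how the scalar inequalities are handled. The paper applies the Mean Value Theorem directly: for $\lambda\in(0,1]$ there is some $\lambda_0\in(0,\lambda)$ with $(\exp(\eta\lambda)-1)/\lambda = \eta\exp(\eta\lambda_0) \leq \eta\exp(\eta)$, and analogously for the second inequality, which finishes in a single stroke. Your route instead uses convexity to obtain the secant bound $\exp(\eta p)\leq 1+(\exp(\eta)-1)p$ and then compares the slope $\exp(\eta)-1$ to $\eta\exp(\eta)$ via the inequality $1+x\leq\exp(x)$. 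The paper's argument is marginally more direct, while yours has the small aesthetic virtue of reusing $1+x\leq\exp(x)$, which the surrounding analysis already invokes; substantively the two are equally elementary and closely related.
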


\begin{proof}
  It is sufficient to prove the inequalities for $P$ replaced by a
  scalar $\lambda\in[0,1]$, for then the operator inequalities follow by
  considering a spectral decomposition of $P$.
  If $\lambda=0$ both inequalities are immediate, so let us assume
  $\lambda>0$.
  By the Mean Value Theorem there exists a value
  $\lambda_0\in(0,\lambda)$ such that
  \[
  \frac{\exp(\eta \lambda) - 1}{\lambda} = \eta \exp(\eta \lambda_0)
  \leq \eta \exp(\eta),
  \]
  from which the first inequality follows.
  Similarly, there exists a value $\lambda_0\in(0,\lambda)$ such that
  \[
  \frac{\exp(-\eta\lambda)-1}{\lambda} = -\eta\exp(-\eta\lambda_0)
  \leq -\eta \exp(-\eta),
  \]
  which yields the second inequality.
\end{proof}

\section{Proof that QIP is contained in PSPACE} \label{sec:proof}

With the algorithm from the previous section in hand, the proof that
$\class{QIP}\subseteq\class{PSPACE}$ follows the same approach used in
\cite{JainUW09} to prove $\class{QIP}(2)\subseteq\class{PSPACE}$.
The proof is described in the two subsections that follow.

\subsection{Simulation by bounded-depth Boolean circuits}

Let $A = (A_{\yes},A_{\no})$ be a promise problem in $\class{QIP}$.
Our goal is to prove that $A\in\class{PSPACE}$.
Given that $\class{PSPACE} = \class{NC}(\mathit{poly})$, as was
mentioned in Section~\ref{sec:NC}, it suffices to prove
$A\in\class{NC}(\mathit{poly})$.

Using Theorem 5.4 of \cite{MarriottW05} we have that there exists a
single-coin $\class{QMAM}$-protocol for $A$ with perfect completeness
and soundness probability $1/2 + \varepsilon$, for
$\varepsilon = 1/64$.
(Of course any other sufficiently small positive constant would do,
and in fact one can replace $\varepsilon$ with an exponentially small
value---but this choice is sufficient for our needs.)
We will make a small modification in Arthur's specification so that he
always accepts outright with probability $4\varepsilon$, and otherwise
measures the registers sent by Merlin according to his original
specification.
With this modification in place, we have that if $x\in A_{\yes}$, then
Arthur can be made to accept with certainty, while if $x\in A_{\no}$
then the maximum probability with which Arthur can be made to accept
is smaller than $1/2 + 3\varepsilon$.
It also holds that every strategy of Merlin causes Arthur to accept
with probability at least $4\varepsilon$.

Now, for any fixed choice of an input string $x\in A_{\yes}\cup
A_{\no}$, let $Q$ be the operator defined from this modified
specification of Arthur on the input $x$ as was described in
Section~\ref{sec:sdp}.
Give that Arthur always accepts with probability at least
$4\varepsilon$, it follows that the smallest eigenvalue of $Q$ is at
least $2\epsilon$.
Therefore, $Q$ is invertible and satisfies $\snorm{Q^{-1}} \leq
1/(2 \varepsilon)$. 
Moreover, the semidefinite program defined by $Q$, as  described in
Section~\ref{sec:sdp}, has an optimal value that is
equal to 1 when $x\in A_{\yes}$ and smaller than $1/2 + 3\varepsilon$
when $x\in A_{\no}$.

Next, consider a two-step computation as follows:
\begin{mylist}{\parindent}
\item[1.]
  Compute from a given input string $x$ an explicit description of the
  operator $Q$ specified above.

\item[2.]
  Run an $\class{NC}$ implementation of the algorithm from
  Section~\ref{sec:algorithm} on $Q$.
\end{mylist}
The first step of this computation can be performed in
$\class{NC}(\mathit{poly})$ using an exact computation.
This follows from the fact that in $\class{NC}(\mathit{poly})$ one can
first compute explicit matrix representations of all of the gates in
the quantum circuit specifying Arthur's measurements, and then process
these matrices using elementary matrix operations to obtain $Q$.
Note that, without loss of generality, the description of $Q$ has
length polynomial in $N$, which (as defined in the algorithm) is the
dimension of the space on which it acts.

The second step of the computation, which is an $\class{NC}$
implementation of the algorithm from Section~\ref{sec:algorithm}, is
not quite as straightforward as the first step.
In fact, it is only possible for us to {\it approximate} this
algorithm in $\class{NC}$, as we only know how to approximate the
operator $Q^{-1/2}$, the matrix exponentials, and the spectral
decompositions needed to obtain the projection operators
$\Pi_0,\ldots,\Pi_{T-1}$.
Nevertheless, we claim that such an approximation is possible in
$\class{NC}$, with sufficient accuracy to distinguish the two cases
$x\in A_{\yes}$ and $x\in A_{\no}$.
This fact is argued in the subsection following this one.

Under the assumption that the second step is performed in
$\class{NC}$, we have that the composition of the two steps is an
$\class{NC}(\mathit{poly})$ computation.
We therefore obtain that $A\in\class{NC}(\mathit{poly})$ as required.

\subsection{A high precision NC implementation of the algorithm}

It remains to argue that the algorithm from
Section~\ref{sec:algorithm} can be approximated by an $\class{NC}$
computation with sufficient accuracy to distinguish the cases
$x\in A_{\yes}$ and $x\in A_{\no}$ as described above.
It will be evident from the discussion that follows that obtaining
sufficient accuracy in $\class{NC}$ is not a significant
challenge; and one could, in fact, demand much greater accuracy (by an
order of magnitude) and still be able to perform the computation in  
$\class{NC}$.

The first step in the implementation of the algorithm is to
approximate $Q^{-1/2}$.
In more precise terms, we first compute an operator $R$ such that
$R^2$ is a close approximation to $Q$, and then compute $R^{-1}$ in
$\class{NC}$ using an exact computation.
To compute $R$, we may compute a spectral decomposition of $Q$, and then
take $R$ to be the operator that results by replacing each eigenvalue
in this decomposition with its square root.
It is straightforward to perform high-precision approximations of
these computations in $\class{NC}$ with sufficient accuracy so that
$\norm{Q - R^2} \leq \varepsilon$ and $\norm{R^{-1}} \leq 1/\varepsilon$.
Now, if we compare two semidefinite programs, one defined by $Q$ as
specified in Section~\ref{sec:sdp} and the other defined similarly
with $Q$ replaced by $R^2$, we find that the optimal values are
close.
More specifically, given that $\norm{Q - R^2} \leq \varepsilon$, the
optimal values of the two semidefinite programs can differ by at most
$2\varepsilon$.
Thus, the optimal value of the semidefinite program for $R^2$ is
at least $1 - 2\varepsilon > 7/8$ in case $x\in A_{\yes}$ and
at most $1/2 + 5\varepsilon < 5/8$ in case $x\in A_{\no}$.

In the interest of clarity, to avoid introducing a new variable $R$
into the analysis that follows, let us simply redefine $Q$ at this
point to be $R^2$.
Thus, $Q^{-1/2} = R^{-1}$ is known exactly by our implementation of
the algorithm and all of the requirements on $Q$ are in place---which
are that $\norm{Q^{-1/2}}\leq 1/\varepsilon = 64$ and the optimal
value of the semidefinite program in Section~\ref{sec:sdp} defined by
$Q$ is at least $7/8$ if $x\in A_{\yes}$ and at most $5/8$ if $x\in
A_{\no}$.

Next, let us focus on the projection operators
\begin{equation} \label{eq:Pi-operators}
\Pi_0,\ldots,\Pi_{T-1} \in \pos{\X\otimes\W}
\end{equation}
and the density operators
\begin{equation} \label{eq:rho-and-xi-operators}
\rho_0,\ldots,\rho_T\in\density{\X\otimes\W\otimes\Y}
\quad\quad\text{and}\quad\quad
\xi_0,\ldots,\xi_T\in\density{\W}
\end{equation}
that are to be computed in the course of the algorithm.
We will choose an integer $K$ that we take to represent the number of
bits of accuracy with which these operators are stored.
In more precise terms, the algorithm will store the real and imaginary
parts of each of the entries of the above operators
\eqref{eq:Pi-operators} and \eqref{eq:rho-and-xi-operators} as
integers divided by $2^K$.
It will suffice to take $K = c\lceil\log(N)\rceil$, for a suitable
choice of a constant $c$, although one could in fact afford to take
$K$ to be polynomial in $N$ rather than logarithmic.
As each entry of these operators has absolute value at most 1, 
the total number of bits needed to represent the entire
collection of operators is $O(T K N^2)$, which is polynomial in $N$.

In addition to the above operators, the algorithm will store the
scalar values $\beta_0,\ldots,\beta_{T-1}$.
These values do not need to be approximated; each value $\beta_t$ is
computed exactly as the rational number defined by the operators
$\rho_t$ and $\Pi_t$ stored by the algorithm.
We will not consider that the operators
$W_1,\ldots,W_T$ and $Z_1,\ldots,Z_T$ are stored by the algorithm at
all, as their only purpose in the computation is to specify the
density operators $\rho_1,\ldots,\rho_T$ and
$\xi_1,\ldots,\xi_T$.

We will also take $\mu$ to be a small constant, say $\mu = 2^{-10}$,
that will represent an error parameter for the computation.
Similar to the choice of $K$, we could afford to take $\mu$ to be
significantly smaller than this and still be able to perform the
computation in $\class{NC}$.

Now, consider the two steps (a) and (b) that are performed within each
iteration of the loop in step 2 of the algorithm.
We must approximate these steps, and we demand the following accuracy
requirements when doing this.
For step (a), we will require that the projection operator $\Pi_t$
computed by the algorithm satisfies the condition 
\begin{equation} \label{eq:primal-bound-2}
\Pi_t(\Phi(\rho_t) - \gamma \I_{\X}\otimes \xi_t)\Pi_t 
\geq P_t -\frac{\mu}{M}\I_{\X\otimes\W},
\end{equation}
where $P_t$ is defined as the positive part of
$\Phi(\rho_t) - \gamma\I_{\X}\otimes\xi_t$.
It is possible to perform such a computation in $\class{NC}$ by
setting the error parameter $\eta$ in an approximate spectral
decomposition computation of $\Phi(\rho_t) -
\gamma\I_{\X}\otimes\xi_t$ as $\eta = \mu/(2M)$, for instance.
Then, $\Pi_t$ is taken to be the appropriately defined projection
operator rounded to $K$ bits of accuracy.
For step (b), we will require that
\begin{equation} \label{eq:W-and-Z-bounds}
\norm{\rho_{t+1} - W_{t+1}/\tr(W_{t+1})} < \frac{\mu\delta}{N}
\quad\quad
\text{and}
\quad\quad
\norm{\xi_{t+1} - Z_{t+1}/\tr(Z_{t+1})} < \frac{\mu\delta}{M}.
\end{equation}
In these inequalities we do not consider that $W_{t+1}$ and $Z_{t+1}$
are stored by the algorithm, but rather we consider that they are
operators {\it defined} by the equations
\[
W_{t+1} = \exp\(-\epsilon\delta\sum_{j = 0}^t
\Phi^{\ast}(\Pi_j/\beta_j)\)
\quad\quad\text{and}\quad\quad
Z_{t+1} = \exp\(\varepsilon\delta\sum_{j = 0}^t
\tr_{\X}(\Pi_j/\beta_j)\),
\]
for the particular operators $\Pi_0/\beta_0,\ldots,\Pi_{t}/\beta_t$
that are stored by the algorithm.
The algorithm's {\it approximations} of $W_{t+1}$ and $Z_{t+1}$
determine the density operators $\rho_{t+1}$ and $\xi_{t+1}$.
As the matrix exponentials are to be computed for operators having
norm bounded by $T = O(\log N)$, it is clear that $\rho_{t+1}$ and
$\xi_{t+1}$ with the required properties can be computed in
$\class{NC}$.

Finally, we have that the total number of iterations in the algorithm
is $T = O(\log N)$.
Given that each of the iterations of the algorithm can be performed in
$\class{NC}$, and that the total number of bits that must be stored
from one iteration to the next is polynomial in $N$, we have that the
composition of these $T$ iterations can be performed in $\class{NC}$
as well.

It remains only to show that the approximations
\eqref{eq:primal-bound-2} and \eqref{eq:W-and-Z-bounds} are sufficient to
guarantee that the algorithm accepts or rejects correctly.
This analysis is done in almost exactly the same way as was presented
in Section~\ref{sec:algorithm}.
Even though the operators
\[
\rho_0,\ldots,\rho_{T-1},\quad
\xi_0,\ldots,\xi_{T-1},\quad
\text{and}\quad
\Pi_0/\beta_0,\ldots,\Pi_{T-1}/\beta_{T-1}
\]
do not necessarily satisfy the precise equations that were assumed in
Section~\ref{sec:algorithm}, they may nevertheless be used to
construct primal and dual solutions to the semidefinite program that
satisfy the required bounds.

In the case that the algorithm accepts, a consideration of the
operators $\rho = \rho_t$, $\Pi = \Pi_t$, and $\xi = \xi_t$ as before
allows for the construction of a primal feasible solution with a large
objective value.
In place of \eqref{eq:primal-bound-1}, we have
\[
\Phi(\rho) \leq \I_{\X} \otimes \(\gamma \xi + 2 \tr_{\X}(\Pi
\Phi(\rho) \Pi) + \frac{\mu}{M} \I_{\W}\),
\]
which allows for a lower bound of $1/(\gamma + 2\varepsilon + \mu)$
for the primal objective function.
For our choice $\mu = 2^{-10}$ of an error bound, this quantity is
still lower-bounded by $5/8$, which implies that the algorithm has
operated correctly in this case.

A similar analysis to the one before holds for the case of rejection
as well.
We consider the operators
\[
\Pi_0/\beta_0,\,\ldots,\Pi_{T-1}/\beta_{T-1}
\]
produced by the algorithm, and take
\[
Y = \frac{(1 + 2\varepsilon)(1 + 2 \mu)}{T}\sum_{t = 0}^{T-1}
\Pi_t/\beta_t.
\]
When proving the dual feasibility of $Y$ we are no longer free to
substitute $\rho_t = W_t/\tr(W_t)$, but instead we must introduce a
small error term due to the fact that $\rho_t$ is just an
approximation to $W_t/\tr(W_t)$.
By the first inequality of \eqref{eq:W-and-Z-bounds} above we may
conclude that
\[
\ip{\frac{W_t}{\tr(W_t)}}{\Phi^{\ast}(\Pi_t/\beta_t)} \geq 1 - \mu;
\]
and by substituting this into \eqref{eq:Z-frac-inequality} and
following a similar argument to the one from before we obtain
\[
\lambda_N(\Phi^{\ast}(Y)) \geq (1 + 2\varepsilon)(1 + 2\mu)
\((1-\mu)\exp(-\varepsilon) - \frac{\varepsilon^2}{4}\) > 1.
\]
Thus, dual feasibility holds for $Y$.
Along similar lines, by using \eqref{eq:primal-bound-2} and
\eqref{eq:W-and-Z-bounds}, one finds again that the dual objective
value achieved by $Y$ less than $7/8$, and therefore the algorithm
operates correctly in this case as well.

\subsection*{Acknowledgments}

We thank Xiaodi Wu for helpful discussions.
Rahul Jain's research is supported by the internal grants of the
Centre for Quantum Technologies, which is funded by the Singapore
Ministry of Education and the Singapore National Research Foundation.
Zhengfeng Ji's research at the Perimeter Institute is supported by the
Government of Canada through Industry Canada and by the Province of
Ontario through the Ministry of Research \& Innovation.
Sarvagya Upadhyay's research is supported in part by Canada's NSERC,
CIFAR, MITACS, QuantumWorks, Industry Canada, Ontario's Ministry of
Research and Innovation, and the U.S.~ARO.
John Watrous's research is supported by Canada's NSERC and CIFAR.

\bibliographystyle{alpha}

\end{document}